\newtheorem{thm}{Theorem}
\newtheorem{prop}[thm]{Proposition}
\newtheorem{lem}[thm]{Lemma}
\newtheorem{defn}[thm]{Definition}
\newcommand{\pair}[2]{{\langle#1,#2\rangle}}
\newcommand{\norm}[1]{\|#1\|}
\newcommand{\abs}[1]{\left| {#1} \right|}
\newcommand{\Restr}[2]{{#1}_{{|#2}}}
\begin{document}
\begin{abstract}
We prove the completeness of the Bethe ansatz eigenfunctions
of the Laplacian on a Weyl alcove with repulsive  boundary condition at the walls. For the root system of type {\bf A} this amounts to the result of Dorlas of the completeness of the Bethe ansatz eigenfunctions of the quantum Bose gas 
on the circle with repulsive delta-function interaction.
\end{abstract}
\title
[Completeness of the  Bethe ansatz on Weyl alcoves]
{Completeness of the  Bethe ansatz \\  on Weyl alcoves}
\author{E. Emsiz}
\date{\today}
\subjclass[2000]{Primary: 81R12 ; Secondary: 20C08}
\keywords{Bethe ansatz, root systems, spectral analysis, reflection groups}
\address{Instituto de Matem\'atica y F\'isica, 
Universidad de Talca, Casilla 747, Talca, Chile}
\email{eemsiz@inst-mat.utalca.cl}
\maketitle

%\newpage
\section{Introduction}
In a celebrated paper  Lieb and Liniger \cite{lieb-liniger} introduced the 
quantum Bose gas on the circle with repulsive delta-function interaction and 
using the Bethe ansatz method showed that it is  exactly solvable. C. N Yang and C. P. Yang  \cite{yang-yang} proved that the corresponding
Bethe ansatz equations are controlled by a strictly convex function. Fundamental progress 
was made by  Korepin \cite{korepin}, who  proved  Gaudin's \cite{gaudin, gaudin2} compact determinantal formula for
the $L^2$-norms of the Bethe ansatz eigenfunctions and Dorlas \cite{dorlas}, who 
proved the completeness and orthogonality of these functions 
using a lattice version of the quantum inverse scattering methods \cite{k-b-i}.

A crucial  insight by Gaudin \cite{gaudin,gaudin2} was that the
quantum  Bose gas  on the line with a delta-function 
potential has a natural generalization in the context of the root system of 
semi-simple complex Lie algebras. 
Gutkin and Sutherland \cite{gutkin-sutherland,gutkin}  
pushed these generalization further by introducing
affine root system version. The quantum Hamiltonian now has a potential
expressible as a weighted sum of delta-functions at the associated affine root hyperplanes.  In \cite{eos,eos2} the affine Weyl group invariant
quantum eigenvalue problem for these systems in the repulsive regime 
was studied using representation theory of the trigonometric Cherednik algebra and Dunkl operators.
From this perspective the quantum Bose gas of quantum particles  with repulsive delta-potential on the circle  essentially corresponds to the affine root 
system of type {\bf A}. A large part of  \cite{eos,eos2} deals with extending many of the  above properties of the quantum Bose gas to the root system versions (\cite{eos2} 
deals with quantum spin-particle systems).

In this paper we continue with the study of the general root system versions.  The main result is the completeness of the Bethe ansatz eigenfunctions in the sense that they span a dense subspace of the Hilbert space $L^2(A)$ of quadratically integrable functions with respect to the standard Euclidean measure on a Weyl alcove $A$.  
The  question of orthogonality and norm formulae for the root system versions turns out to be much harder to crack and   for general root systems  we only have 
conjectural statements.

The paper is organized as follows. In the following sections we introduce the quantum integrable systems and summarize 
the main results of \cite{eos,eos2} on the quantum spectral eigenvalue problem. 
The completeness will be deduced from a continuity argument of Dorlas \cite{dorlas} at zero coupling, which we recall in Section \ref{continuity}. An important part  is played by making sense (in Section \ref{mainproof}) of the formal quantum Hamiltonians as positive self-adjoint operators on $L^2(A)$. Section \ref{sobolev} is preparatory and deals with 
the non-interacting case  and the necessary Sobolev space theory to define the forms to make sense of the aforementioned formal Hamiltonians as positive self-adjoint operators on $L^2(A)$.

%\newpage
\section{Completeness}
\label{completeness}

We will freely use concepts and facts on root systems, see e.g. \cite{humphreys2} for a detailed exposition. Let $V$ be an Euclidean space of dimension $n$ and $R$ a finite, irreducible crystallographic root system spanning  $V$.  We denote $\langle \cdot,\cdot\rangle$ for the inner product on $V$ and $\|\cdot\|$ for the corresponding norm. The co-root of a root $\alpha\in R$ is $2\alpha/\norm\alpha^2$.
We fix a positive system $R^+$ and denote the corresponding basis of simple roots by $I=\{\alpha_1,\ldots,\alpha_n\}$.
We let  $\alpha_0=-\varphi$, where $\varphi$ denotes the highest root of $R^+$.
The corresponding Weyl alcove
\begin{equation}  \label{e:alcove}
  A=\{v\in V| 0< \langle \alpha,v\rangle < 1,\, \forall \alpha\in R^+\}
\end{equation}
is bounded by the walls
\begin{align}\label{e:walls}
V_0&=\{v\in V \; | \; \langle\alpha_0,v\rangle+1=0\},\\
V_i&=\{v\in V \; | \; \langle\alpha_i,v\rangle=0\},\quad i=1,\dots,n
\end{align}
The space of repulsive coupling constants $\mathcal K$ is the set  of all $k_\alpha\in (0,\infty)$, $\alpha\in R$ such that $k_\alpha=k_\beta$ whenever $\norm\alpha=\norm\beta$. We identify  $\mathcal K$ naturally with  $(0,\infty)$ (if $R$ has one root lenght) or  $(0,\infty)^2$ (respectively two root lenghts).

For any $k\in \mathcal K$ we will consider the following spectral problem for the Laplacian in the alcove $A$ with repulsive  boundary condition at the walls, i.e.
\begin{equation}
  \label{e:bvp1}
-\Delta f=E f   \quad \text{on $A$} ,
\end{equation}
  \begin{align}
\label{e:bvp2}
(\partial_{\alpha_i^\vee} f)(v)&=k_{\alpha_i} f(v),\,\quad v\in V_i\cap\overline A, 
\quad i=0,1,\dots,n,
  \end{align}
(note that $\alpha_i^\vee$ points inward to $A$)
where $\Delta$  refers to the Laplacian and $\partial_v$  to the partial derivative in the direction of $v$.
%$$(\partial_vf)(u)=\frac{d}{dt}\bigg|_{t=0}f(u+tv)$$$

Let $s_i:V\to V$, $i=0,1,\dots,n$ denote the orthogonal reflection in the wall $V_i$:
\begin{align*}
s_0(v)&=v-(\langle \alpha_0,v\rangle+1) \alpha_0^\vee, \\
s_i(v)&=v-\langle \alpha_i,v\rangle \alpha_i^\vee, i=1,2\dots,n
\end{align*}
The reflections $s_1,\dots, s_n$ generate the  Weyl group $W$ associated with $R$ while  $s_0,s_1,\dots, s_n$ generate the affine Weyl group $W^a$. A second important presentation of $W^a$ is given by  $W^a=W\ltimes Q^\vee$,
with $Q^\vee$ the co-root lattice generated by $\alpha^\vee$, $\alpha\in R$, acting by translations on $V$.

It was shown in  \cite[Theorem 2.6]{eos},\cite{eos2} that the Bethe ansatz function 
\begin{equation}\label{e:BAF}
\phi^k_\lambda(v)=\frac{1}{\#W}\sum_{w\in W}c_k(w\lambda)e^{i\langle w\lambda,v\rangle},
\quad v\in\overline A,
\end{equation}
solves the boundary value problem \eqref{e:bvp1}, \eqref{e:bvp2} with eigenvalue $E=\norm\lambda^2$  if 
\begin{equation}
  \label{e:cfun}
c_k(\lambda)=\prod_{\alpha\in R^+}
\frac{\langle\lambda,\alpha^\vee\rangle-ik_\alpha}
     {\langle\lambda,\alpha^\vee\rangle}
\end{equation}
and the spectral parameter $\lambda\in V$ satisfies the Bethe ansatz equations
(BAE)
\begin{equation}\label{e:BAE}
e^{i\langle \lambda,\alpha_j^\vee\rangle} =\prod_{\beta\in R^+}
\left(
\frac{\langle \lambda,\beta^\vee \rangle+ik_\beta}
{\langle \lambda,\beta^\vee \rangle-ik_\beta}
\right)^{\langle \alpha_j^\vee, \beta \rangle},\quad j=1,2,\dots,n.
\end{equation}
 The  cone of dominant weights is denoted by
\[
\mathcal P^+=\{\lambda\in V \,\,\, | \,\,\,
\langle \lambda,\alpha^\vee\rangle \in\mathbb{Z}_{\ge 0},\quad \forall\alpha\in R^+\}.
\]
while $\mathcal P^{++}=\rho+\mathcal P^+$ is the cone of strictly dominant weights ($\rho$ is the half sum of positive roots).  The function 
$S_k:\mathcal P^{++}\times V\to \mathbb{R}$  defined by
\begin{equation}\label{master}
 S_k(\mu,v) = 
 \frac{1}{2}\norm{v}^2 - 2\pi\pair{v}{\mu}
   + \frac12\sum_{\alpha\in R}
    \norm{\alpha}^2 \int_0^{\pair{v}{\alpha^\vee}}\arctan\left(\frac{t}{k_\alpha}\right)dt
\end{equation}
is strictly convex and assumes a global minimum at  $\widehat\mu_k\in V$. Moreover it is known  \cite[Propositions 2.9 and 2.10]{eos} that  $\widehat \mu_k$  solves the BAE \eqref{e:BAE} and lies in the fundamental Weyl
chamber
\begin{equation}\label{chamber}
V_+=\{v\in V \, | \, \langle v,\alpha^\vee\rangle>0\quad
\forall\alpha\in R^+ \}.
\end{equation}

The main results of this paper states that the Bethe anstaz eigenfunctions are complete in the Hilbert space $L^2(A)=L^2(A,dv)$ of quadratically integrable functions on $A$ with respect to the Euclidean measure  $dv$:
\begin{thm}\label{complete}
The Bethe ansatz eigenfunctions 
 $\phi^k_{i\widehat\mu_k}$, 
$\mu\in \mathcal P^{++}$  span a dense subspace of $L^2(A)$.
\end{thm}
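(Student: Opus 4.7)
The strategy follows the road map outlined in the introduction: realize the formal boundary-value problem \eqref{e:bvp1}--\eqref{e:bvp2} as a positive self-adjoint operator $H_k$ on $L^2(A)$, observe that each Bethe ansatz function $\phi^k_{i\widehat\mu_k}$ is a genuine eigenvector of $H_k$, and transfer completeness from the decoupled limit $k\to 0^+$ (where $H_k$ reduces to the Neumann Laplacian on $A$) to every $k\in\mathcal K$ by a Dorlas-type continuity argument in the coupling.

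First, the Sobolev-space preparations of Section \ref{sobolev} are used in Section \ref{mainproof} to introduce the closed, lower-bounded quadratic form
\[
q_k(f)=\int_A\|\nabla f\|^2\,dv+\sum_{i=0}^{n}\frac{k_{\alpha_i}}{\|\alpha_i^\vee\|}\int_{V_i\cap\overline A}|f|^2\,d\sigma_i,\qquad f\in H^1(A),
\]
whose boundary weights are chosen so that integration by parts recovers the Robin-type condition \eqref{e:bvp2}. Kato's first representation theorem then produces a unique positive self-adjoint operator $H_k$, and the compact Rellich--Kondrachov embedding $H^1(A)\hookrightarrow L^2(A)$ forces the spectrum of $H_k$ to be purely discrete. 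By \cite[Theorem 2.6]{eos} each $\phi^k_{i\widehat\mu_k}$ lies in the form domain of $q_k$ and is an eigenvector of $H_k$ with eigenvalue $\|\widehat\mu_k\|^2$. Moreover, strict convexity of $S_k(\mu,\cdot)$ and joint smoothness of $S_k$ in $k$ force the minimizer $\widehat\mu_k$ to depend continuously on $k\in[0,\infty)$, so the Bethe function $\phi^k_{i\widehat\mu_k}$---a uniformly bounded $\#W$-term exponential sum on $\overline A$---varies continuously with $k$ in $L^2(A)$; in parallel, monotone continuity of $q_k$ in $k$ yields convergence $H_k\to H_0$ in strong resolvent sense as $k\to 0^+$.

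The transfer argument then runs as follows. Suppose some $f\in L^2(A)$ is orthogonal to every $\phi^k_{i\widehat\mu_k}$, $\mu\in\mathcal P^{++}$. Continuity of the overlap $F_\mu(k)=\pair{f}{\phi^k_{i\widehat\mu_k}}$ in $k$, together with its vanishing for $k>0$, forces $\pair{f}{\Phi^0_\mu}=0$ for every $\mu$, where $\Phi^0_\mu:=\lim_{k\to 0^+}\phi^k_{i\widehat\mu_k}\in L^2(A)$. The non-interacting analysis of Section \ref{sobolev} identifies the limiting family $\{\Phi^0_\mu:\mu\in\mathcal P^{++}\}$ as a complete system of Neumann eigenfunctions of $-\Delta$ on $A$, once care is taken of the Weyl-stabilizer redundancies and the $0/0$ degenerations of the prefactor $c_k$ at the walls of $\mathcal P^+$. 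Density of this Neumann system in $L^2(A)$ then forces $f=0$.

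The principal obstacle, and the reason Section \ref{sobolev} is devoted to the non-interacting case, is precisely the identification of this $k\to 0^+$ limit. Whenever $\widehat\mu_k$ approaches a wall of the Weyl chamber as $k\to 0$, the prefactor $c_k(w\widehat\mu_k)$ develops a simple pole, and the limiting $\Phi^0_\mu$ arises only through a delicate cancellation between this pole and the Weyl-symmetric sum. Once $k\to 0^+$ completeness is secured, the combination of compact-resolvent self-adjointness of $H_k$, strict convexity of $S_k$, and continuity of $k\mapsto\widehat\mu_k$ closes the argument---crucially, without invoking the orthogonality or norm formulae that remain conjectural for general root systems.
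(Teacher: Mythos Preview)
Your overall architecture matches the paper: construct $H_k$ from a quadratic form on $H^1(A)$, identify the Bethe functions as eigenvectors, establish continuity of $k\mapsto\phi^k_{\widehat\mu_k}$, and reduce to completeness of the Neumann family at $k=0$. The problem is the transfer step itself.

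The argument ``suppose $f\perp\phi^{k_0}_{\widehat\mu_{k_0}}$ for all $\mu$; then by continuity of $F_\mu(k)=\langle f,\phi^k_{\widehat\mu_k}\rangle$ and its vanishing for $k>0$ we get $f\perp\Phi^0_\mu$'' does not work: you only know $F_\mu(k_0)=0$ at a \emph{single} coupling $k_0$, not for all $k>0$, so continuity in $k$ tells you nothing about $F_\mu(0)$. Orthogonality at one parameter value cannot be pushed to another by continuity of the eigenfunctions alone. This is precisely why the paper invokes Dorlas's abstract result (Theorem~\ref{dorlas}), whose proof relies on two ingredients you do not use: the \emph{monotonicity} $H_k\le H_{k'}$ for $k\le k'$ (so eigenvalues move in one direction, enabling a min--max count) together with \emph{linear independence} of the $\phi^k_{\widehat\mu_k}$ (established in the paper via the action of $P(V)^W$ by constant-coefficient operators, which separates $W$-orbits). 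Compact resolvent of $H_0$ and continuity of the eigenfunctions are necessary but far from sufficient.

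A minor point: your worry about ``$0/0$ degenerations of the prefactor $c_k$ at the walls of $\mathcal P^+$'' is misplaced. As the paper shows in the proof of Proposition~\ref{phicont}, one has $\langle\widehat\mu_k,\alpha^\vee\rangle/k_\alpha\to+\infty$ as $k\to 0$ for every $\alpha\in R^+$, so $c_k(\widehat\mu_k)\to 1$ and the limit $\Phi^0_\mu=\phi^0_{2\pi(\mu-\rho)}$ is reached without any cancellation of singularities.
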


One sees easily that $\phi^k_{\widehat\mu_k}$ and $\phi^k_{\widehat\eta_k}$ are
orthogonal if $\norm{\widehat \mu_k}\neq \norm{\widehat\eta_k}$ (see e.g. \eqref{e:stokes}). Full orthogonality  is to the knowledge of the author only conjectural for general $R$,
as well as the following $L^2$-norm formula conjecture:
\begin{equation}\label{norm}
\frac{1}{\abs{A}}
\int_A\abs{\phi^k_{\widehat\mu_k}(v)}^2dv=
 \frac{\abs{ c_k({\widehat\mu_k})}^2 
   \det B^k_{\widehat\mu_k}}{\# W}, \quad \forall \mu\in \mathcal P^{++},
\end{equation}
Here  $B_v^k: V\times V\to \mathbb{R}$ is the Hessian
of $S_k(\mu,\cdot)$ at $v\in V$ (see also \cite[(9.3)]{eos}) and   $\abs{A}=\int_A dv$.

An interesting consequence of  \eqref{norm} would be 
$\lim_{k\to 0}   \det B^k_{\widehat\rho_k} =\#W$, 
an identity very similar to the limit formula~\cite[(3.5.14)]{he-sc} for the Heckman-Opdam hypergeometric functions associated to $R$.

As alluded to in the introduction, for $R$ of type {\bf A} the norm formulae and orthogonality conjectures were proved by Korepin and Dorlas, respectively. A new proof  of the orthogonality in this case was recently obtained  by van Diejen \cite{diejen06}  using an integrable lattice  dicretization of the  quantum Bose-gas on the circle. The norm formulae \eqref{norm} were also very recently checked \cite{BDM} for all $R$ with  rank less or equal to $3$, thus including  the important test case of $R$ of type  $G_2$.

The upgrade from type {\bf A} to other classical root systems amounts to adding particular reflection terms to the physical model,
see e.g.  \cite{gaudin,gaudin2, CrYo06} (and \cite{eos,eos2} for more references). 
For the exceptional root systems there is not such a convincing  
interpretation as a physical model (however, see  \cite{CrYo06}
for the exceptional root system $R$ of type $G_2$).

%\newpage
%%%%%%%%%%%%%%%%%%%%%%%%%%%%%%%%%%%%%%%%%%%%%%%%%%%%%%%%%%%%%%
%%                                                          %%  
%%                                                          %%
%%%%%%%%%%%%%%%%%%%%%%%%%%%%%%%%%%%%%%%%%%%%%%%%%%%%%%%%%%%%%%
\section{Continuity}
\label{continuity}
Following Dorlas \cite{dorlas} we will deduce the completeness  from a continuity argument at zero coupling:
\begin{thm}[\cite{dorlas}]\label{dorlas}
Let $\{H_t\}_{t\in[0,\infty)}$
be a family of
positive self-adjoint operators acting on a Hilbert space $\mathfrak H$.
Assume that $t\mapsto H_t$ is monotonically nondecreasing
and right-continuous in the strong-resolvent sense and 
that $H_0$ has compact resolvent. 
Suppose that there exists a set of linearly independent eigenfunctions
$\{\phi^t_n\}_{n\in N}$ for $H_t$, parameterized by  a discrete set $N$,  
which depends continuously on $t\in [0,\infty)$,
and which is complete at $t=0$. Then 
the linear subspace
spanned by the $\phi^t_n$ ($n\in N$) is  dense in $\mathfrak H$
for all $t\in[0,\infty)$.
\end{thm}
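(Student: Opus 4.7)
My plan is to propagate completeness of $\{\phi_n^t\}$ forward from $t=0$ by a connectedness argument tailored to the one-sided nature of the hypotheses. The guiding observation is that, writing $P_\Lambda^t := \mathbf{1}_{[0,\Lambda]}(H_t)$ and letting $Q_\Lambda^t$ denote the orthogonal projection onto $\operatorname{span}\{\phi_n^t : \lambda_n(t) \le \Lambda\}$ (with $\lambda_n(t)$ the eigenvalue of $\phi_n^t$), one has $Q_\Lambda^t \le P_\Lambda^t$ unconditionally; by the spectral theorem, denseness of $\operatorname{span}\{\phi_n^t\}$ in $\mathfrak{H}$ is equivalent to the equality $Q_\Lambda^t = P_\Lambda^t$ holding along some unbounded sequence of $\Lambda$'s.

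Three preliminaries are needed. \emph{(i)} Every $H_t$ has compact resolvent: form monotonicity plus min-max give ordered eigenvalues $E_k^{(t)} \ge E_k^{(0)} \to \infty$, so $\sigma(H_t)$ is purely discrete and $P_\Lambda^t$ has finite rank $N_\Lambda(t)$. \emph{(ii)} For $\Lambda \notin \sigma(H_t)$, strong right-resolvent convergence plus discreteness upgrades to operator-norm continuity $P_\Lambda^s \to P_\Lambda^t$ as $s \downarrow t$; along $s \uparrow t$, the monotone strong-resolvent limit $H_{t^-}$ exists and satisfies $H_{t^-} \le H_t$, so the left-limit projection satisfies $P_\Lambda^{t^-} \ge P_\Lambda^t$. \emph{(iii)} From $e^{-uH_t}\phi_n^t = e^{-u\lambda_n(t)}\phi_n^t$ combined with strong resolvent convergence and the assumed continuity of $\phi_n^t$, each $\lambda_n(t)$ is continuous in $t$.

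Now define $T := \{t \ge 0 : \operatorname{span}\{\phi_n^t\} \text{ is dense in } \mathfrak{H}\}$; the hypothesis gives $0 \in T$. Set $t^* := \sup\{\tau \ge 0 : [0,\tau] \subset T\}$ and assume $t^* < \infty$ for contradiction. For the \emph{left-closedness} step (placing $t^* \in T$), pick a generic $\Lambda$ and a sequence $t_k \uparrow t^*$ with $t_k \in T$. For indices $n \in A(\Lambda, t^*) := \{n : \lambda_n(t^*) < \Lambda\}$, a finite set of size $N_\Lambda(t^*)$, continuity of the $\lambda_n$ stabilizes this set under $t_k \uparrow t^*$, and continuity of $\phi_n^t$ gives norm-convergence $Q_\Lambda^{t_k} \to Q_\Lambda^{t^*}$, while $P_\Lambda^{t_k} \to P_\Lambda^{t^{*-}}$ by spectral continuity. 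The chain
\[
Q_\Lambda^{t^*} \;\le\; P_\Lambda^{t^*} \;\le\; P_\Lambda^{t^{*-}} \;=\; Q_\Lambda^{t^*}
\]
collapses to equality, so $t^* \in T$. For the \emph{right-extension} step (placing $t$ slightly above $t^*$ in $T$), at each generic $\Lambda$ the finite set $A(\Lambda, t^*)$ stays in $\{n : \lambda_n(t) < \Lambda\}$ on a right-neighbourhood by continuity on a finite index set, monotonicity gives $N_\Lambda(t) \le N_\Lambda(t^*)$, and the sandwich $|A(\Lambda, t^*)| \le \operatorname{rank} Q_\Lambda^t \le N_\Lambda(t) \le N_\Lambda(t^*) = |A(\Lambda, t^*)|$ collapses to equality; selecting witnessing $\Lambda$'s adapted to each $t$ recovers $t \in T$, contradicting the definition of $t^*$.

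The main obstacle is the asymmetry at potential spectral jump points: without assuming $H_{t^{*-}} = H_{t^*}$, the spectral projection can strictly drop across $t^*$ and seemingly create eigenspaces unreachable by any $\phi_n^{t^*}$; the squeeze inequality in the previous paragraph is exactly what rules this out, exploiting the interplay between monotonicity (controlling the left limit), right-continuity (controlling the forward spectral data), and completeness propagated from below. Secondary technical subtleties---stabilization of the contributing index set as $t_k \uparrow t^*$ (using that for any enumeration of $N$, $\lambda_n(t^*) \to \infty$ as $n \to \infty$, since each eigenvalue of the compact-resolvent operator $H_{t^*}$ has finite multiplicity), and $\Lambda$-dependent shrinking of the right-extension neighbourhoods (absorbed by the first paragraph's equivalence between denseness and equality at an unbounded, not cofinal, family of $\Lambda$'s)---are both handled by the finite-rank compact-resolvent setup.
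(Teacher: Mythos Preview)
The paper does not prove this theorem; it is quoted from Dorlas and used as a black box, so there is no proof in the paper to compare against. I can only assess your argument on its own merits.

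The overall architecture---comparing the finite-rank projections $Q_\Lambda^t\le P_\Lambda^t$ and running a connectedness argument in $t$---is sound and is indeed in the spirit of Dorlas' original proof. But the right-extension step has a genuine gap. You show that for each generic $\Lambda$ there is a right-neighbourhood $(t^*,t^*+\delta_\Lambda)$ on which $Q_\Lambda^t=P_\Lambda^t$, and you then need some single $t>t^*$ to lie in $T$, i.e.\ $Q_\Lambda^t=P_\Lambda^t$ for an \emph{unbounded} family of $\Lambda$'s at that one $t$. Nothing rules out $\delta_\Lambda\to 0$ as $\Lambda\to\infty$; your phrase ``selecting witnessing $\Lambda$'s adapted to each $t$'' silently inverts the quantifiers, and the contradiction with the definition of $t^*$ does not follow. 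A repair is to run the connectedness argument for each fixed $\Lambda$ separately, proving $T_\Lambda:=\{t:Q_\Lambda^t=P_\Lambda^t\}=[0,\infty)$, and only then intersect over $\Lambda$---but this requires care, since the genericity condition on $\Lambda$ must be chosen before one knows the value of $t^*=t^*(\Lambda)$.

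There is also a smaller circularity in the left-closedness step. You write that $A(\Lambda,t^*)$ has size $N_\Lambda(t^*)$, which is exactly the equality $Q_\Lambda^{t^*}=P_\Lambda^{t^*}$ you are trying to prove; and the claimed stabilisation $A(\Lambda,t_k)\to A(\Lambda,t^*)$ uses \emph{left}-continuity of $\lambda_n$, which is not available (the hypotheses give right-continuity, and from $H_{t^-}\le H_t$ only the inequality $\lim_{s\uparrow t}\lambda_n(s)\le\lambda_n(t)$). This part is fixable: pass to a subsequence along which the uniformly bounded sets $A(\Lambda,t_k)$ stabilise to some $A'$, deduce $\operatorname{ran}P_\Lambda^{t^{*-}}=\operatorname{span}\{\phi_n^{t^*}:n\in A'\}$, and then use that any $\phi_n^{t^*}$ with $\lambda_n(t^*)>\Lambda$ lies in $\ker P_\Lambda^{t^*}$, together with linear independence, to conclude $\operatorname{ran}P_\Lambda^{t^*}\subset\operatorname{ran}Q_\Lambda^{t^*}$.
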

In \cite{eos,eos2} the eigenvalue value problem \eqref{e:bvp1}, \eqref{e:bvp2} 
was interpreted as the $W^a$-invariant spectral problem of the following formal quantum Hamiltonian
\begin{equation}\label{e:ham}
\mathcal H_k=-\Delta + \sum_{\substack{\alpha\in R, m\in\mathbb{Z}}}k_\alpha \delta(\pair{\alpha}{\cdot}+m)
\end{equation}
and was studied with the aid of trigonometric Cherednik algebra and Dunkl operators.
Here  $W^a$ acts on the space of $\mathbb{C}$-valued functions $f$ as follows,
 \begin{equation}  \label{e:usual}
(wf)(v)=f(w^{-1}v), \quad (w\in W^a, v\in V).
\end{equation}
We will show that for the $W^a$-invariant theory the formal $\mathcal H_k$ can be interpreted as self-adjoint operators on the fundamental domain $A$ for the action of  $W^a$ on $V$,
and moreover they will be the operators $H_k$ in Theorem \ref{dorlas} (see Section \ref{mainproof} for the exact statement). The functions $\phi^t_n$, $n\in N$  will be $\phi^k_{\widehat\mu_k}$, $\mu \in \mathcal P^{++}$. As for the case $k\equiv 0$ (free boundary value problem), note that the plane wave
\begin{equation}\label{phi0}
\phi_\lambda^0(v)=\frac{1}{\#W}\sum_{w\in W}e^{iw\lambda(v)},\quad v\in \overline A,
\end{equation}
solves \eqref{e:bvp1}, \eqref{e:bvp2} iff   $\lambda=\widehat\mu_0=2\pi(\mu-\rho)$, $\mu\in\mathcal P^{++}$. The continuity of the eigenfunctions, one of the conditions of Theorem \ref{dorlas} is guaranteed by:
\begin{prop}\label{phicont}
Let $\mu\in \mathcal P^{++}$.  Then $k\mapsto \phi^k_{\widehat\mu_k}$
defines a continuous map from $\mathcal K\cup\{0\}$ to $L^2(A)$.
\end{prop}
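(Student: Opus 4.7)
The plan is to split the problem into continuity of $k\mapsto\widehat\mu_k$ as a map into $V$, and joint continuity of $(k,\lambda)\mapsto\phi^k_\lambda$ as a $C(\overline A)$-valued map; since $\overline A$ is bounded, $C(\overline A)\hookrightarrow L^2(A)$ continuously, and composing the two statements yields the proposition. The delicate point is $k=0$, where the parameter space has a boundary and where the limit $\widehat\mu_0=2\pi(\mu-\rho)$ may lie on a wall of $V_+$, causing denominators in $c_k$ to vanish.

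\emph{Continuity of the minimizer.} First extend $S_k(\mu,v)$ continuously to $k=0$ via the pointwise limit
$$S_0(\mu,v)=\tfrac12\|v\|^2-2\pi\langle v,\mu\rangle+\tfrac{\pi}{4}\sum_{\alpha\in R}\|\alpha\|^2\,|\langle v,\alpha^\vee\rangle|;$$
joint continuity of $(k,v)\mapsto S_k(\mu,v)$ on $(\mathcal K\cup\{0\})\times V$ then follows from dominated convergence together with $|\arctan|\le \pi/2$. The Hessian of the integral term of \eqref{master} is everywhere positive semidefinite, so $S_k(\mu,\cdot)$ is strongly convex with modulus $\ge 1$ uniformly in $k$, which forces the quadratic lower bound
$$\tfrac12\|v-\widehat\mu_k\|^2\le S_k(\mu,v)-S_k(\mu,\widehat\mu_k),\qquad v\in V.$$
The family $\{\widehat\mu_k\}$ is locally bounded by the uniform coercivity $S_k(\mu,v)\ge \tfrac12\|v\|^2 - C\|v\|$, so a standard $\Gamma$-convergence argument (applying the quadratic bound at $v=\widehat\mu_{k_0}$ and invoking joint continuity of $S_k$) gives $\widehat\mu_k\to\widehat\mu_{k_0}$.

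\emph{Continuity of the Bethe function.} Multiply by the Weyl denominator $\pi(\lambda):=\prod_{\alpha\in R^+}\langle\lambda,\alpha^\vee\rangle$; using the identity $\prod_{\alpha\in R^+}\langle w\lambda,\alpha^\vee\rangle=(-1)^{\ell(w)}\pi(\lambda)$ one obtains
$$\pi(\lambda)\,\phi^k_\lambda(v)=\frac{1}{\#W}\sum_{w\in W}(-1)^{\ell(w)}\prod_{\alpha\in R^+}\bigl(\langle w\lambda,\alpha^\vee\rangle-ik_\alpha\bigr)\,e^{i\langle w\lambda,v\rangle}.$$
The right-hand side is smooth in $(k,\lambda,v)$ and $W$-antisymmetric in $\lambda$, so it vanishes on each reflection hyperplane $\langle\lambda,\alpha^\vee\rangle=0$. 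The standard smooth-divisibility result for $W$-antisymmetric smooth functions by the Weyl denominator then guarantees that the quotient defines a jointly continuous function of $(k,\lambda)$ uniformly for $v\in\overline A$, agreeing with $\phi^k_\lambda$ off the walls and giving its continuous extension onto them; at $k=0$ this quotient reduces manifestly to the plane wave $\phi^0_\lambda$ of \eqref{phi0}.

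Composing the two continuities proves the proposition. The hard part is the last step of the previous paragraph: the continuous extension of $\phi^k_\lambda$ across walls of $\lambda$ must be a genuine continuous function, not just a formal algebraic manipulation, and this is exactly what the smooth divisibility by the Weyl denominator delivers. In particular it handles the borderline case $\mu=\rho$, where $\widehat\mu_0=0$ lies on every wall and $c_k$ has $0/0$ factors in each simple direction simultaneously.
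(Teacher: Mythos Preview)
Your argument is correct and diverges from the paper's on the key point, namely continuity at $k=0$. For the continuity of $k\mapsto\widehat\mu_k$ both proofs proceed via stability of unique minimizers; the paper isolates this as a general topological lemma (compactness plus uniqueness of the minimum forces continuity), whereas you use the uniform $1$-strong convexity of $S_k(\mu,\cdot)$ to the same end. The substantive difference is in handling $\phi^k_{\widehat\mu_k}$ as $k\to 0$. The paper does \emph{not} establish joint continuity of $(k,\lambda)\mapsto\phi^k_\lambda$ across the walls of $\lambda$; instead it uses the critical-point equation $\widehat\mu_k+\sigma^k_{\widehat\mu_k}=2\pi\mu$ for the particular curve $\lambda=\widehat\mu_k$ to show that $\langle\widehat\mu_k,\alpha^\vee\rangle/k_\alpha\to+\infty$ for every $\alpha\in R^+$, so that each factor of $c_k(\widehat\mu_k)$ tends to $1$ and the $0/0$ ambiguity disappears along this curve. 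Your Weyl-denominator trick is more general: it yields joint continuity of $\phi^k_\lambda$ in $(k,\lambda)$ regardless of how a path approaches a wall, and requires no input from the Bethe equations. Since $\pi(\lambda)\phi^k_\lambda(v)$ is in fact entire in $\lambda$ (a finite sum of polynomials times exponentials), the division by the product of linear forms $\pi(\lambda)$ is elementary analytic division rather than the harder $C^\infty$ divisibility you invoke, so the step is even simpler than you suggest. What the paper's route gives in exchange is the additional asymptotic information $\langle\widehat\mu_k,\alpha^\vee\rangle/k_\alpha\to+\infty$, which your method does not see.
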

Since $(v,k)\mapsto S_k(\mu,v)$ is differentiable in $k\in \mathcal K$, $v\in V$
the implicit function theorem shows that
$k\mapsto \hat{\mu}_{k}$ is smooth from $\mathcal K$ to $V_+$, and therefore
$k\mapsto \phi^{k}_{\widehat\mu_k}$ is continuous from $\mathcal K$ to $L^2(A)$.
The problematic point is continuity in the boundary point $k\equiv 0$.
We will actually not use any differentiability, Proposition \ref{phicont} will be proved by  a purely topological argument, based on the following lemma (which we formulate in greater generality than strictly necessary  since the proof is not more difficult).
\begin{lem}\label{gammacont} 
Let $X$ and $Y$ be locally compact Hausdorff spaces with $Y$ first countable.
Let $f$ be a continuous real-valued function on $X\times Y$
such that $f(\cdot,y)$ has a 
unique global minimum at $\gamma_y$
for  $y\in Y$. Define the map
$\gamma:Y\to X$  by $y\mapsto \gamma_y$. 
Assume also there is a compact $D\subset X$ such that  
$\gamma(Y)\subset D$. Then $\gamma$ is continuous. 
\end{lem}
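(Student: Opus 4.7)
The plan is to verify sequential continuity of $\gamma$ at an arbitrary point $y_0\in Y$ (which is equivalent to continuity since $Y$ is first countable) by a contradiction argument that combines compactness of $D$ with the uniqueness of the minimum. The underlying idea is classical: any cluster point of $\gamma(y_n)$ as $y_n\to y_0$ must itself minimize $f(\cdot,y_0)$, and hence must coincide with $\gamma(y_0)$.

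Concretely, I would fix a sequence $y_n\to y_0$ in $Y$ and assume, for contradiction, that $\gamma(y_n)$ does not converge to $\gamma(y_0)$ in $X$. Then there is an open neighborhood $U$ of $\gamma(y_0)$ and, after passing to a subsequence, $\gamma(y_n)\notin U$ for every $n$. The tail of the sequence lies in $D\setminus U$, which is closed in the compact set $D$ and therefore itself compact in $X$. In a compact Hausdorff space every net has a convergent subnet, so $(\gamma(y_n))$, viewed as a net, admits a subnet $\gamma(y_{n_\alpha})\to x^\ast$ with $x^\ast\in D\setminus U$; in particular $x^\ast\neq\gamma(y_0)$. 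The corresponding subnet $(y_{n_\alpha})$ of $(y_n)$ still converges to $y_0$.

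The key step is then to identify $x^\ast$ as a minimum. By joint continuity of $f$ one has $f(\gamma(y_{n_\alpha}),y_{n_\alpha})\to f(x^\ast,y_0)$. For any fixed $x\in X$ the defining property of $\gamma_{y_{n_\alpha}}$ gives the pointwise inequality
\[
f(\gamma(y_{n_\alpha}),y_{n_\alpha})\le f(x,y_{n_\alpha}),
\]
and passing to the limit along the subnet (using continuity of $f(x,\cdot)$) yields $f(x^\ast,y_0)\le f(x,y_0)$. Since $x\in X$ was arbitrary, $x^\ast$ is a global minimum of $f(\cdot,y_0)$, and by the uniqueness hypothesis $x^\ast=\gamma(y_0)$, contradicting $x^\ast\neq\gamma(y_0)$.

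The main obstacle I expect is the mild topological subtlety that $X$ is not assumed first countable, so one cannot directly extract a convergent \emph{subsequence} of $(\gamma(y_n))$ inside $D$; this is precisely why the extraction is phrased via a subnet, whose existence only requires compactness of $D$ in the Hausdorff space $X$, not sequential compactness. Local compactness of $X$ and $Y$ plays no role in the argument; the essential hypotheses are continuity of $f$, uniqueness of the global minimum, compactness of $D\subset X$ with $X$ Hausdorff, and first countability of $Y$ to reduce continuity to sequential continuity.
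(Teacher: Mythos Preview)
Your proof is correct and follows the classical ``cluster points of argmins are argmins'' route: from a sequence $y_n\to y_0$ with $\gamma(y_n)\notin U$, extract a convergent subnet in the compact set $D\setminus U$ and identify its limit $x^\ast$ as a global minimizer of $f(\cdot,y_0)$ via joint continuity, forcing $x^\ast=\gamma(y_0)$. The paper begins identically (a sequence $y_j\to\eta$ with $\gamma_{y_j}\in X\setminus N$) but then takes a different, more quantitative path that avoids subnets altogether: compactness of $D\cap(X\setminus N)$ yields a strictly positive gap
\[
m_0=\min_{x\in D\cap(X\setminus N)}\bigl(f(x,\eta)-f(\gamma_\eta,\eta)\bigr)>0,
\]
and a uniform-continuity argument on the compact $D$ shows $f(\gamma_{y_j},\eta)-f(\gamma_\eta,\eta)\le 2\varepsilon<m_0/2$ for $j$ large, contradicting $\gamma_{y_j}\in D\cap(X\setminus N)$. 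Your subnet argument is cleaner and, as you observe, makes visible that local compactness of $X$ and $Y$ is never used; the paper's $\varepsilon$--$\delta$ argument is more hands-on and sidesteps any appeal to nets, at the price of the extra uniform-continuity step.
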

\begin{proof}
Assume  $\gamma$ is not continuous in a point $\eta\in Y$. We
will show that this leads to a contradiction. 
There is a neighborhood $N$ of $\gamma_\eta$ with compact closure and a sequence $y_1,y_2,\dots$ 
in $Y$ with $y_j\to \eta$ and $\gamma_{y_j}\in X\backslash N$ for all $j$,
and where $X\backslash N$ is the complement of $N$ in $X$.
Defining  $m_0=\min_{x\in D \cap (X\backslash N)} (f(x,\eta)-f(\gamma_{\eta},\eta))$,
we have $m_0>0$ by definition of $\gamma_{\eta}$ and compactness of $D$.
Choose an $\varepsilon$ such that $0<\varepsilon<m_0/4$. 
Choose a neighborhood $U$ of $\eta$  with compact closure such that 
 $\abs{f(x,\eta) - f(x,y)}\le \varepsilon$
for $y\in U$ and $x\in D$.
Hence for $y\in U$ 
we have
%\begin{equation}\label{star}
$  f(\gamma_y,\eta) - f(\gamma_y,y)\le\varepsilon.$
%\end{equation}
Furthermore by the definition of $\gamma_y$ we have
$f(\gamma_y,y)  \le  f(\gamma_{\eta},y)$,
and  continuity of $f$ in $y$ implies that there is a 
neighborhood $U'$ of $\eta$ such that 
$f(\gamma_{\eta},y) \le f(\gamma_{\eta},\eta) + \varepsilon$ for $y\in U'$,
thus
$f(\gamma_y,y)  \le f(\gamma_{\eta},\eta) + \varepsilon$ for $y\in U'$.
Therefore 
%\[
$
 f(\gamma_y,\eta) - f(\gamma_{\eta},\eta) \le 2\varepsilon <m_0/2
$
%\]
for $y\in U\cap U'$,
which leads to a contradiction  for  $y=y_j$ and $j$ large
by the definition of $m_0$.
\end{proof}

\begin{proof}[Proof of Proposition \ref{phicont}]
Let $X=V$, $Y=\mathcal K\cup\{0\}$ and $f(v,k)=S_k(\mu,v)$, where
$ S_0(\mu,v) = \frac{1}{2}\norm{v}^2 - 2\pi\pair{v}{\mu}
   + \pi \sum_{\alpha\in R^+} \abs{\pair{v}{\alpha}}$.
Then $S_0(\mu,v)$ is continuous (but not differentiable),
strictly convex in $v$ and has a unique global minimum that is obtained at
$\widehat\mu_0=\mu-\rho$. Lemma \ref{gammacont} with \cite[Propositions 2.10]{eos} yields
that  $k\mapsto \widehat{\mu}_{k}$ is continuous on $\mathcal K\cup\{0\}$, and  in particular 
$\lim_{k \to 0} \widehat{\mu}_{k} = 
\widehat{\mu}_0=2\pi(\mu-\rho)$. By  Section 9 of \cite{eos} $\widehat{\mu}_k\in V$ is determined by the equation
$\widehat{\mu}_k+\sigma_{\widehat{\mu}_k}^k=2\pi\mu$
where $\sigma_\lambda^k=2\sum_{\alpha\in R^+}
\arctan\left(\pair{\lambda}{\alpha^\vee}/k_\alpha\right)\alpha$.
Taking the inproduct of this  with a  $v\in V_+$ gives
 $\arctan(\pair{\widehat\mu_k}{\alpha^\vee}/k_\alpha) \to \pi/2$ as $k\to 0$, and whence $\pair{\widehat\mu_k}{\alpha^\vee}/k_\alpha\to+\infty$ as $k\to 0$. In particular $c_k(\widehat\mu_k)$ is continuos in $k\in \mathcal K\cup \{0\}$, which together with \eqref{e:BAF} yields the proposition.
\end{proof}

In the following section we will make sense of $\mathcal H_0=-\Delta$ as a self-adjoint positive operator  $H_0$. The form to make sense  of \eqref{e:ham} as a self-adjoint operator on $L^2(A)$ will be defined in the final section. It will be a  perturbation  of the form corresponding to  $H_0$. For this we will show  in
the following section that the form domain  of the form corresponding to  $H_0$
is  the Sobolev space of once  weak differentiable functions on $A$ while the form itself is equal to the Sobolev  inner product.
%\newpage
%%%%%%%%%%%%%%%%%%%%%%%%%%%%%%%%%%%%%%%%%%%%%%%%%%%%%%%%%%%%
%%                                                        %%
%%                                                        %%
%%%%%%%%%%%%%%%%%%%%%%%%%%%%%%%%%%%%%%%%%%%%%%%%%%%%%%%%%%%%
\section{The Laplacian  on the Weyl alcove}
\label{sobolev}
The domain of any operator $H$ will be denoted by $D(H)$ and its form and form 
domain by $q_H$ and $Q(H)$, respectively.

One checks easily that a function $f\in D_0=\{\Restr{f}{\overline A}|f\in C^\infty(V)^{W^a}\}$ satisfies
\begin{equation}\label{e:freejump}
(\partial_{\alpha_i^\vee}f)(v)=0, \quad  v\in V_i\cap\overline A,\;\; i=0,\dots,n
\end{equation}
By Stokes' theorem we have for suitable $f,g$,
\begin{equation}
  \label{e:stokes}
\int_A ((\Delta f)g   -f\Delta g)=
\sum_{j=0}^n\frac1{\norm{\alpha_j^\vee}}\int_{V_j\cap \overline A}
( f\overline{\partial_{\alpha_j^\vee}g}-(\partial_{\alpha_j^\vee}f) \overline g).
\end{equation}
Whence $-\Delta$ with domain $D_0$ is symmetric on $L^2(A)$, and moreover also positive by another application of Stokes' theorem. Furthermore: 
\begin{prop}\label{freeHam} ($-\Delta,D_0)$ is essentially 
self-adjoint and $D_0$ is an operator core for the unique positive self-adjoint extension $(H_0,D(H_0))$. 
 Moreover the $\phi^0_{\mu}\in D_0\subset D(H_0)$, $\mu\in 2\pi \mathcal P^+$ form a  complete set of orthogonal functions and
$ H_0 \phi^0_{\mu} = \norm{\mu}^2 \phi^0_{\mu}$.  Furthermore $H_0$ has compact resolvent.
\end{prop}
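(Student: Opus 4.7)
The plan is to exploit the affine Weyl group structure to transfer the analysis to the torus $T=V/Q^\vee$. Since $\overline A$ is a fundamental domain for $W^a=W\ltimes Q^\vee$ acting on $V$, and since every $Q^\vee$-fundamental domain contains exactly $\#W$ alcoves, averaging and unfolding furnish a scaled isometric isomorphism $L^2(A,dv)\cong L^2(T)^W$, where $T$ carries the induced flat Euclidean measure. The restriction map $C^\infty(V)^{W^a}\to C^\infty(\overline A)$ identifies $D_0$ with $C^\infty(T)^W$ under this isomorphism; the Neumann-type conditions \eqref{e:freejump} are automatic on this side, because $W^a$-invariance of a smooth function on $V$ forces its normal derivative at every mirror to vanish.

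On the compact flat manifold $T$ the Laplacian $-\Delta_T$ is essentially self-adjoint on $C^\infty(T)$ by standard elliptic theory on closed Riemannian manifolds, and its $L^2$-eigenbasis consists of the characters $e^{i\pair{\lambda}{\cdot}}$ with $\lambda$ in the $\mathbb{Z}$-dual lattice of $Q^\vee$, namely $\lambda\in 2\pi P$, where $P$ is the weight lattice, with eigenvalue $\norm\lambda^2$. The $W$-action on $T$ commutes with $-\Delta_T$ and permutes these characters according to the $W$-action on $2\pi P$. Symmetrizing yields a complete orthogonal system in $L^2(T)^W$ indexed by the $W$-orbits in $2\pi P$, i.e.\ by $2\pi\mathcal P^+$; and these symmetrizations are precisely the $\phi^0_\mu$ defined in \eqref{phi0}.

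Transferring back to $L^2(A)$, the positive symmetric operator $(-\Delta,D_0)$ has a complete orthogonal set of eigenvectors $\phi^0_\mu\in D_0$, $\mu\in 2\pi\mathcal P^+$, with $-\Delta\phi^0_\mu=\norm\mu^2\phi^0_\mu$. A symmetric operator whose domain contains a complete orthogonal set of eigenvectors is essentially self-adjoint (the closure is diagonalized by that basis in the obvious way, and any self-adjoint extension must restrict to the same diagonal operator on finite linear combinations). Hence the closure of $(-\Delta,D_0)$ is the unique positive self-adjoint extension $H_0$, for which $D_0$ is an operator core, and $\phi^0_\mu\in D(H_0)$ with $H_0\phi^0_\mu=\norm\mu^2\phi^0_\mu$. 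Since $\norm\mu^2\to\infty$ as $\mu$ exhausts the discrete cone $2\pi\mathcal P^+$, the resolvent $(H_0+1)^{-1}$ is diagonal in the eigenbasis with eigenvalues tending to zero, hence compact.

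The main obstacle is setting up the identification $L^2(A,dv)\cong L^2(T)^W$ with $D_0\leftrightarrow C^\infty(T)^W$ carefully; the easy direction (restriction of a $W^a$-invariant smooth function on $V$ descends to $T$ and is $W$-invariant) is what is actually used, but one must also track the measure-theoretic scaling by $\#W$ between $\overline A$ and a $Q^\vee$-fundamental domain. Once this identification is in place, everything reduces to Fourier analysis on a torus plus the elementary criterion for essential self-adjointness via a complete orthogonal set of eigenvectors in the domain.
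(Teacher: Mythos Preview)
Your argument is correct and shares the same logical skeleton as the paper's: one first shows that the $\phi^0_\mu$, $\mu\in 2\pi\mathcal P^+$, form a complete orthogonal system of eigenvectors lying in $D_0$, and then invokes the standard criterion (Reed--Simon~II, Theorem~X.39 in the paper's citation) that a symmetric operator with such a system in its domain is essentially self-adjoint; compactness of the resolvent follows from $\norm\mu^2\to\infty$. The genuine difference is in how completeness of the $\phi^0_\mu$ is obtained. The paper stays on $\overline A$ and appeals directly to the Stone--Weierstrass theorem, whereas you pass to the torus $T=V/Q^\vee$, identify $L^2(A)\cong L^2(T)^W$ and $D_0\cong C^\infty(T)^W$, and then use Fourier analysis on $T$ (the characters $e^{i\pair\lambda\cdot}$, $\lambda\in 2\pi P$) together with $W$-symmetrization. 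Your route is more structural and makes the indexing by $2\pi\mathcal P^+$ transparent as $W$-orbits on the character lattice; the paper's route is shorter and avoids setting up the isomorphism and tracking the $\#W$ scaling. The reference to elliptic theory on closed manifolds is harmless but unnecessary: completeness of the Fourier basis on a flat torus is elementary, and you end up using the eigenvector criterion anyway.
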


\begin{proof}
The first statement follows from the second one and  \cite[Theorem X.39]{reed-simon-II}.  We have $\phi^0_\mu\in D_0$  because \eqref{phi0} extends directly to a $W^a=W\ltimes Q^\vee$-invariant function in $C(V)^{W^a}$.
 Orthogonality of the $\phi^0_\mu$ is shown by an explicit
calculation and $ H_0 \phi^0_{\mu} =-\Delta\phi^0_\mu= \norm{\mu}^2 \phi^0_{\mu}$. Density of the $\phi^0_\mu$  follows from an elementary application of the Stone-Weierstrass theorem. The last statement follows from \cite[Theorem XIII.64]{reed-simon-IV}. 
\end{proof}

For a $f\in L^2(A)$ and  $\eta\in V$ one says that $\partial_\eta f\in L^2(A)$ weakly  provided that there is a
$g\in L^2(A)$  such that $\partial_\eta f=g$ as distributions on $A$.
It is a standard result in Sobolev space theory that 
$W^{1,2}=
\{f\in  L^2(A)\; | \; 
\partial_\eta f\in L^2(A),\;  \forall \eta\in V\} $
is a Hilbert space with inner product
\begin{equation}\label{sobolevnorm}
(f,g)_{W^{1,2}} =
\int_{A} f(v)\overline{g(v)} dv +
 \sum_{j=1}^n \int_{A} (\partial_{\eta_j} f)(v) 
\overline{(\partial_{\eta_j}g)(v)} dv,
\end{equation}
where $\{\eta_1,\dots,\eta_n\}$ denotes an orthonormal basis for $V$ 
(\eqref{sobolevnorm} is in particular independent of the choice of
orthonormal basis).  Stokes' theorem with \eqref{e:freejump} 
yields that  $h_0(f,g) = (f,g)_{W^{1,2}}$ for $f,g\in D_0\subset Q_0$.
Making  use of the geometry of the affine hyperplane arrangement associated with the  root system $R$ we can prove for the symmetric domain $A$:
\begin{prop}\label{Q0W}
The form domain $Q_0$ of $h_0$ is  equal to the Sobolev 
space  $W^{1,2}$ of once weak differentiable functions on $A$
and $h_0=(\cdot,\cdot)_{W^{1,2}}$.
\end{prop}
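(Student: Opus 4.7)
The plan is first to reduce the statement to a density claim, then to verify that density by a symmetric extension and mollification argument. We have already noted that $D_0\subseteq W^{1,2}$ and that $h_0(f,g)=(f,g)_{W^{1,2}}$ for $f,g\in D_0$. By Proposition~\ref{freeHam}, $D_0$ is an operator core for $H_0$, hence a form core; that is, $D_0$ is dense in $Q_0$ in the form norm. Since the form norm and the Sobolev norm agree on $D_0$, the completion of $D_0$ in either norm produces the same Hilbert space, so $Q_0$ is realised as an isometrically embedded closed subspace of $W^{1,2}$. Proving the proposition thus reduces to showing that $D_0$ is dense in $W^{1,2}(A)$ in the Sobolev norm.

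For the density claim, I would approximate a given $f\in W^{1,2}(A)$ by first extending $f$ to a $W^a$-invariant function on $V$ and then mollifying. Define $\tilde f\colon V\to\mathbb C$ by $\tilde f(wv)=f(v)$ for $v\in\overline A$ and $w\in W^a$; this is unambiguous because the stabilizer in $W^a$ of any $v\in\overline A$ is generated by the reflections in the walls of $\overline A$ through $v$, and these fix $v$. Assuming that $\tilde f\in W^{1,2}_{\mathrm{loc}}(V)$, which is the main technical step and is addressed in the last paragraph, choose a non-negative radial mollifier $\phi_\varepsilon\in C^\infty_c(V)$ with $\int\phi_\varepsilon\,dv=1$ and shrinking support as $\varepsilon\to 0$, and set $\tilde f_\varepsilon=\phi_\varepsilon*\tilde f$. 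Radial symmetry of $\phi_\varepsilon$ together with $W$-invariance of $\tilde f$ yields $W$-invariance of $\tilde f_\varepsilon$, and translation invariance of convolution together with $Q^\vee$-periodicity of $\tilde f$ yields $Q^\vee$-periodicity; hence $\tilde f_\varepsilon\in C^\infty(V)^{W^a}$ and $\Restr{\tilde f_\varepsilon}{\overline A}\in D_0$. Standard mollifier estimates on the torus $V/Q^\vee$ then give $\Restr{\tilde f_\varepsilon}{\overline A}\to f$ in $W^{1,2}(A)$.

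The technical heart of the argument is the claim that $\tilde f\in W^{1,2}_{\mathrm{loc}}(V)$, and this is where the geometry of the affine hyperplane arrangement enters. For any two $W^a$-adjacent cells $w\overline A$ and $ws_j\overline A$, which share the flat wall $wV_j$, the function $\tilde f$ restricted to $w\overline A\cup ws_j\overline A$ is precisely the even (Neumann-type) reflection of $f\circ w^{-1}$ across $wV_j$. The classical half-space extension lemma asserts that such an even reflection maps $W^{1,2}$ of one side into $W^{1,2}$ of the union: the trace of $f\circ w^{-1}$ matches itself across the wall, so no surface distribution arises in the weak gradient; only the normal component of the gradient flips sign, preserving its $L^2$-norm. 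The codimension-two strata where more than one wall meets have Lebesgue measure zero and are invisible to $W^{1,2}$-membership, so this single-wall verification is enough. Iterating across the $W^a$-tiling $V=\bigcup_{w\in W^a}w\overline A$ gives $\tilde f\in W^{1,2}_{\mathrm{loc}}(V)$. This reduction to one hyperplane at a time is the main obstacle; it relies essentially on the flatness of each piece of $\partial A$ and on the fact that the reflections in these walls generate $W^a$.
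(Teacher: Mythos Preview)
Your proof is correct and shares the overall architecture of the paper's argument---reduce to density of $D_0$ in $W^{1,2}(A)$, extend by $W^a$-symmetry to all of $V$, then mollify with a $W$-invariant kernel---but the technical core is handled differently. The paper first invokes the standard density of $C^\infty(\overline A)$ in $W^{1,2}(A)$, so that it only needs to extend and mollify a \emph{smooth} $\phi$; the extension $u$ is then continuous and piecewise $C^\infty$, and the work lies in justifying $\partial_\eta u_\varepsilon=(u^\eta)_\varepsilon$ via an explicit difference-quotient bound, obtained from the mean value theorem together with the combinatorial observation that a short line segment meets at most $\#W/2$ alcoves, followed by dominated convergence. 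You instead extend a general $f\in W^{1,2}(A)$ directly, show $\tilde f\in W^{1,2}_{\mathrm{loc}}(V)$ via the even-reflection extension lemma across each wall, and then let standard mollifier convergence in $W^{1,2}$ finish. Your route is conceptually cleaner and packages the geometry into a single Sobolev-extension fact, at the price of invoking trace theory; the paper's route is more elementary and self-contained. One minor sharpening: saying the codimension-two strata are ``Lebesgue measure zero and invisible to $W^{1,2}$-membership'' is not quite the right reason, since codimension-one sets also have Lebesgue measure zero yet certainly matter. The cleanest fix is to observe that integrating $\tilde f\,\partial_\eta\psi$ alcove-by-alcove against a test function $\psi$ produces only codimension-one boundary terms, and these cancel in adjacent pairs by your trace-matching observation; alternatively, cite that sets of zero $(n{-}1)$-dimensional Hausdorff measure have zero $W^{1,2}$-capacity and are therefore removable.
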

\begin{proof}
Since $D_0$ is an operator core for $(H_0,D(H_0))$
 (Proposition \ref{freeHam}) and $h_0=(\cdot,\cdot)_{W^{1,2}}$ on $D_0$ 
it suffices to show that $D_0$ is a dense subspace of $W^{1,2}$
in the Sobolev topology for the first statement. 
The second statement then follows because $D_0$ is a form core for $(h_0,Q_0)$.
Since 
 $C^\infty(\overline{A})= \bigr\{\Restr{f}{\overline{A}} 
  \; | \; f\in C^\infty(V) \bigr\}$
is dense  in $W^{1,2}$ (see e.g. \cite[Theorem~I.3.6]{wloka}),
  it  suffices to show 
that every $\phi\in C^\infty(\overline{A})$ can be 
approximated in $W^{1,2}$ by functions from $D_0$.
Fix therefore a $\phi\in C^\infty(\overline{A})$.
Denote the unique $W^a$-invariant extension of $\phi$ to $V$ by  $u$, thus $u\in C(V)^{W^a}$ (see \eqref{e:usual}).

We will use a regularization procedure. For a $r>0$ and $v\in V$ let $\bar B(v,r)=\{v'\in V|\norm{v'-v}\le r \}$.
Choose a $j\in C^\infty_c(V)^{W}$ with $j\ge 0$ and support
in $\bar B(0,1)$, and $\int_V j(v) dv = 1.$
For $\varepsilon>0$ define  $j_\varepsilon \in C^\infty_c(V)^{W}$ by
$j_\varepsilon(v)=j(v/\varepsilon)/\varepsilon^n$.  Consider the regularization 
$u_\varepsilon \in C^\infty(V)$ of $u$,
$
u_\varepsilon (v) = \int_{V} j_\varepsilon(v-v') u(v') dv'
 =  \int_{V} j_\varepsilon(v') u(v-v') dv'.
$
Since $u\in L^1_{loc}(V,dv)^{W^a}$ we have $u_\varepsilon\in C^\infty(V)^{W^a}$ and
in particular $\Restr{u_\varepsilon}{\overline{A}}\in D_0$.  We shall show that  $\Restr{u_\varepsilon}{\overline{A}}\to \phi$ in the Sobolev norm as $\varepsilon\searrow 0$.

For any $\eta\in V$ consider the function in $u^\eta\in L^1_{loc}(V,dv)$ defined uniquely  by $u^\eta=\partial_\eta u$ on $V_{\text{reg}}$. We claim that 
$\partial_\eta u_\varepsilon= (u^\eta)_\varepsilon$ for all $\varepsilon>0$.
For  $t>0$  consider the difference-quotient function
$g_{t,\eta}(v) = (u(v+t\eta) - u(v))/t$ for  $v\in V$.
Then $g_{t,\eta}$ is continuous on $V$ and $Q^\vee$-periodic.
It is easily seen that for every $v\in V$ and $0<t < \text{diam}(A)$
the line segment between $v$ and $v+t\eta$ lies in at most $\#W/2$
alcoves.
Since  $\Restr{u}{\overline C}\in C^\infty(\overline{C})$ for all alcoves $C=wA$ ($w\in W^a$),
multiple applications of the mean value theorem (but at most $\#W/2$ times)  together with the $W^a$-invariance 
and continuity of $u$ yields the bound $\#W M_\eta/2$ for
 $\abs{g_{t,\eta}}$ on $V$
for $t\in (0,\text{diam}(A))$ and
where $M_\eta$ denotes the supremum of $\partial_\eta u$ on $A$.
Whence  $\abs{j_\varepsilon g_{t,\eta}(v-\cdot)}$ is bounded by
 $(\#W M_\eta)/2\sup_{V}j/(\varepsilon^n)$ (and has support in $\bar B(0,\varepsilon)$) for all $v\in V$.  Since
 $\lim_{t\to 0} g_{t,\eta} = u^\eta$ pointwise on $V_{\text{reg}}$,
Lebesgue's dominated convergence theorem  (integrating over $\bar B(0,\varepsilon)$)
yields then indeed
 $\partial_\eta u_\varepsilon = (u^\eta)_\varepsilon$ for all $\varepsilon>0$.

Together with  $u_\varepsilon\to u$  uniformly on compact subsets of $A$, which
follows from  $\Restr{u}{A}\in C(A)$ and an elementary uniform continuity argument, it follows that $\partial_\eta u_\varepsilon\to \partial_\eta u$ point-wise on $A$.
Note that if a locally integrable function  is bounded by a $M\ge 0$, 
then so does its regularization.
Applying this observation to $u^\eta$  and using $\partial_\eta u_\varepsilon = (u^\eta)_\varepsilon$ shows that $\abs{\partial_\eta(u_\varepsilon - u)}$ is bounded by
$2M_\eta$
uniformly  on $A$.  A simple application of 
Lebesgue's dominated convergence theorem now gives
  $\norm{\partial_\eta u_\varepsilon-\partial_\eta u}_{L^2(A)}\to 0$ as $\varepsilon\to 0$.
Furthermore  $\norm{u_\varepsilon-u}_{L^2(A)}\to 0$ as $\varepsilon\to0$ since  $u_\varepsilon \to u$ uniformly on $\overline{A}$ and whence  $\Restr{u_\varepsilon}{A}\to \Restr{u}{A}=\phi$
in the Sobolev norm,  concluding the proof.
\end{proof}

In the following final section we will finish the  proof of Theorem \ref{complete} on the completeness of the Bethe ansatz eigenfunctions on the Weyl alcove $A$.

%\newpage
%%%%%%%%%%%%%%%%%%%%%%%%%%%%%%%%%%%%%%%%%%%%%%%%%%%%%%%%%%%%%
%%                                                         %%
%%                                                         %%
%%%%%%%%%%%%%%%%%%%%%%%%%%%%%%%%%%%%%%%%%%%%%%%%%%%%%%%%%%%%%
\section{Proof of the completeness}
\label{mainproof}
We denote the restriction of the Euclidean measure on $V$ to $\partial{A}$   
also by $dv$. The following non-trivial result is a special case  of \cite[Theorem I.8.7]{wloka},
\begin{thm}[Trace operator]\label{trace} 
There exist a continuous linear operator 
\begin{equation}
\mathcal B: W^{1,2}\to L^2(\partial A,dv)
\end{equation}
such that  $\mathcal B\phi = \Restr{\phi}{\partial A}$ for
 $\phi\in W^{1,2}\cap C^1(\overline{A})$, where 
$C^1(\overline A)=\{\Restr{f}{\overline A}|f\in C^1(V)\}$.
\end{thm}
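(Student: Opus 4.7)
The plan is to define $\mathcal B$ first on the dense subspace $C^1(\overline A)\subset W^{1,2}$ by the natural restriction $\phi\mapsto\Restr{\phi}{\partial A}$, prove the boundedness estimate
\begin{equation*}
\|\Restr{\phi}{\partial A}\|_{L^2(\partial A)}\le C\|\phi\|_{W^{1,2}}
\end{equation*}
for $\phi\in C^1(\overline A)$, and then extend $\mathcal B$ by continuity to all of $W^{1,2}$. Density of $C^1(\overline A)$ in $W^{1,2}$ is already available from Proposition \ref{Q0W} (the approximants in $D_0$ produced there lie in $C^\infty(\overline A)\subset C^1(\overline A)$).

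For the estimate I would argue face by face. The boundary $\partial A$ is covered by the closed walls $F_i=V_i\cap\overline A$, $i=0,1,\dots,n$, whose relative interiors are pairwise disjoint and whose complement in $\partial A$ has vanishing $(n-1)$-dimensional measure, so it suffices to bound $\|\Restr{\phi}{F_i}\|_{L^2(F_i,dv)}$ for each $i$. Fix such $i$, let $\nu_i$ be the inward unit normal to $F_i$, and choose $h_0>0$ so that the slab $C_i=\{v'+s\nu_i:v'\in F_i,\,0<s<h_0\}$ lies in $A$ (shrinking $F_i$ slightly near the lower-dimensional corners if needed; their contribution vanishes in the limit by monotone convergence). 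Applying the fundamental theorem of calculus to $s\mapsto|\phi(v'+s\nu_i)|^2$ for $\phi\in C^1(\overline A)$ yields
\begin{equation*}
|\phi(v')|^2=|\phi(v'+h\nu_i)|^2-\int_0^h \bigl(\partial_{\nu_i}|\phi|^2\bigr)(v'+s\nu_i)\,ds,\qquad h\in[0,h_0].
\end{equation*}
Bounding $\bigl|\partial_{\nu_i}|\phi|^2\bigr|\le 2|\phi||\partial_{\nu_i}\phi|\le|\phi|^2+|\partial_{\nu_i}\phi|^2$, integrating in $h$ from $0$ to $h_0$, dividing by $h_0$, and then integrating over $v'\in F_i$ by Fubini, one obtains
\begin{equation*}
\|\Restr{\phi}{F_i}\|_{L^2(F_i)}^2\le\tfrac{1}{h_0}\|\phi\|_{L^2(C_i)}^2+\int_{C_i}\bigl(|\phi|^2+|\partial_{\nu_i}\phi|^2\bigr)\,dv\le C_i\|\phi\|_{W^{1,2}}^2.
\end{equation*}
Summing over $i$ gives the required trace inequality.

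Given the estimate, $\mathcal B$ is produced by density: for $f\in W^{1,2}$, approximate by $\phi_n\in C^1(\overline A)$ with $\phi_n\to f$ in $W^{1,2}$; then $\{\Restr{\phi_n}{\partial A}\}$ is Cauchy in $L^2(\partial A,dv)$ by the estimate, its limit $\mathcal Bf$ is independent of the sequence (apply the estimate to the difference of two approximating sequences), and the resulting $\mathcal B:W^{1,2}\to L^2(\partial A,dv)$ is continuous linear and agrees with the pointwise restriction on $W^{1,2}\cap C^1(\overline A)$. The one mildly delicate point is the non-smoothness of $\partial A$ at the lower-dimensional strata where several walls meet; but these carry no $(n-1)$-dimensional measure, so they do not affect the $L^2$-estimate and the face-by-face argument suffices.
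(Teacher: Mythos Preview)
The paper does not actually prove this theorem; it simply records it as a special case of the trace theorem for Lipschitz domains in Wloka's book \cite[Theorem~I.8.7]{wloka}. So there is nothing in the paper to compare against---you are supplying a proof where the paper outsources one.

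Your overall strategy (establish the bound on $C^1(\overline A)$ and extend by density) is the standard one and is sound, and the density input is indeed already contained in the proof of Proposition~\ref{Q0W}. The gap is in the slab estimate near the corners. Because the dihedral angles of the Weyl alcove are $\pi/m_{ij}\le\pi/2$, the perpendicular slab of height $h_0$ over $F_i$ fits inside $A$ only after you excise from $F_i$ a collar of width comparable to $h_0$ around its relative boundary; hence shrinking $F_i$ by $\varepsilon$ forces $h_0\lesssim\varepsilon$, and the constant $1/h_0$ in your inequality blows up as $\varepsilon\to0$. Your monotone-convergence remark does not repair this: for a fixed $\phi\in C^1(\overline A)$ it controls $\|\phi\|_{L^2(F_i\setminus F_i^\varepsilon)}$ only through $\|\phi\|_\infty$, which is not dominated by $\|\phi\|_{W^{1,2}}$, so you never obtain a constant independent of $\phi$. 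The closing paragraph also misdiagnoses the difficulty---the point is not that the lower-dimensional strata carry $(n{-}1)$-measure (they do not), but that the slab geometry degenerates as one approaches them.

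A clean fix exploiting the convexity of $A$: replace the slab computation by the divergence theorem applied to the vector field $|\phi|^2X$ with $X(v)=v-p$ for a fixed interior point $p\in A$. Writing $\nu$ for the outward unit normal, one has $\langle X,\nu\rangle\ge\mathrm{dist}(p,\partial A)>0$ everywhere on $\partial A$, and
\[
\mathrm{dist}(p,\partial A)\int_{\partial A}|\phi|^2\,dv \;\le\; \int_{\partial A}|\phi|^2\langle X,\nu\rangle\,dv \;=\; \int_A\bigl(n\,|\phi|^2+2\,\mathrm{Re}(\overline{\phi}\,X\!\cdot\!\nabla\phi)\bigr)\,dv \;\le\; C\,\|\phi\|_{W^{1,2}}^2,
\]
which yields the trace inequality in one stroke, corners included. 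Alternatively one may localize with a partition of unity and bi-Lipschitz boundary charts, which is essentially what the cited reference does.
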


We now are able to  define the form to make sense of \eqref{e:ham} as a positive self-adjoint operator.
\begin{defn}\label{hk}
We denote by  $\delta_k$ the following form with domain $W^{1,2}$,
$$ \delta_k(f,g)= 
\sum_{j=0}^n \frac{k_{a_j}}{\norm{a_j^\vee}}
\int_{ \overline{A}\cap V_j} \mathcal Bf(v) \overline{\mathcal Bg(v)}\,\,dv
$$
and by $h_k$ the form $h_0+\delta_k$ with domain $W^{1,2}$ ($=Q_0$ 
by Proposition \ref{Q0W}).
\end{defn}
By continuity of  $\mathcal B$ there is  a $c>0$ such that
\begin{equation}\label{e:sobineq}
  \delta_1(f,f)\le c \, ( (f,f)_{W^{1,2}} +(f,f)_{L^2(A)}),\quad \forall f\in W^{1,2}
\end{equation}

For root system of type {\bf A}  our $\delta_k$ differs from Dorlas' (see \cite[(2.7)]{dorlas}) in the sense that his $\delta_k$ has no integration over an affine wall, which is explained by the fact that we work 
in the center-of-mass coordinates.

We write $k\le k'$ if $k_\alpha\le k'_\alpha$ for all $\alpha\in R$.
Following Dorlas \cite{dorlas} we are now in position to make sense of the formal Hamiltonian~\eqref{e:ham} as 
a positive self-adjoint operator.
\begin{prop}\label{Hk} 
(i) There is a unique positive self-adjoint
operator $H_k$ on $L^2(A)$ with form
 $(h_k, W^{1,2})$. Furthermore $D_0$ is a form core for $h_k$.\\
(ii) The  $H_k$ are non-decreasing in $k$, i.e. $H_k\le H_{k'}$ if $k\le k'$.\\
(iii) $k\mapsto H_k$ 
is right-continuous in $k\in \mathcal K\cup\{0\}$ in the strong-resolvent sense.\\
(iv) Let $\mu\in \mathcal P^{++}$.
Then $\phi^k_{\widehat\mu_k}\in D(H_k)$ and
 $H_k \phi^k_{\widehat\mu_k}= \norm{\widehat\mu_k}^2 \phi^k_{\widehat\mu_k}$
\end{prop}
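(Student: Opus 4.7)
Parts (i)--(iii) are form-theoretic consequences of the trace inequality \eqref{e:sobineq}, while part (iv) is an operator-level statement requiring integration by parts against the repulsive boundary condition. For (i), $h_k=h_0+\delta_k$ is symmetric and non-negative on $W^{1,2}$: $h_0$ is the (positive) form of $H_0$ by Proposition \ref{freeHam}, and $\delta_k$ is a non-negative linear combination of $L^2$-norms of boundary traces. Rescaling \eqref{e:sobineq} by $\max_\alpha k_\alpha$ yields $\delta_k(f,f)\le C(k)\bigl((f,f)_{W^{1,2}}+(f,f)_{L^2(A)}\bigr)$ for every $f\in W^{1,2}$, so the norm associated with $h_k+(\cdot,\cdot)_{L^2(A)}$ is equivalent to the Sobolev norm; completeness of $W^{1,2}$ for the latter then forces $h_k$ to be closed. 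The representation theorem for closed positive quadratic forms produces the unique positive self-adjoint $H_k$ with form $(h_k,W^{1,2})$, and the same norm equivalence transports form cores, so $D_0$ (a form core for $h_0$ by Proposition \ref{Q0W}) remains a form core for $h_k$.

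For (ii), $k\le k'$ makes $\delta_k\le\delta_{k'}$ term-by-term, hence $h_k\le h_{k'}$ as quadratic forms on $W^{1,2}$, and $H_k\le H_{k'}$ follows by the order-preserving correspondence between closed positive forms and positive self-adjoint operators. For (iii), given a decreasing sequence $k^{(n)}\searrow k$ in $\mathcal K\cup\{0\}$, dominated convergence applied to the boundary integrals defining $\delta_{k^{(n)}}$ gives $h_{k^{(n)}}(f,f)\searrow h_k(f,f)$ for each $f\in W^{1,2}$; the monotone convergence theorem for closed quadratic forms on a common domain then delivers $H_{k^{(n)}}\to H_k$ in the strong resolvent sense, which is exactly the asserted right-continuity.

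Part (iv) is where I expect the main obstacle. Fix $\mu\in\mathcal P^{++}$ and set $\phi=\phi^k_{\widehat\mu_k}\in C^\infty(\overline A)\subset W^{1,2}$; by construction $\phi$ satisfies the boundary value problem \eqref{e:bvp1}--\eqref{e:bvp2} with eigenvalue $\norm{\widehat\mu_k}^2$. Pairing $\phi$ with any test function $g\in D_0$ and applying Stokes' identity \eqref{e:stokes}, the $g$-boundary contributions vanish by the free boundary condition \eqref{e:freejump}, while the $\phi$-boundary contributions, after substitution of $\partial_{\alpha_j^\vee}\phi=k_{\alpha_j}\phi$ from \eqref{e:bvp2}, combine with the weights $1/\norm{\alpha_j^\vee}$ in \eqref{e:stokes} to reproduce exactly $\delta_k(\phi,g)$. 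Together with $-\Delta\phi=\norm{\widehat\mu_k}^2\phi$ on $A$ this yields $h_k(\phi,g)=\norm{\widehat\mu_k}^2(\phi,g)_{L^2(A)}$ for all $g\in D_0$, which then extends by density to every $g\in W^{1,2}=Q(H_k)$ via the form-core statement from (i). The representation theorem then gives $\phi\in D(H_k)$ and $H_k\phi=\norm{\widehat\mu_k}^2\phi$. The delicate point is checking that the weights $k_{\alpha_j}/\norm{\alpha_j^\vee}$ chosen in Definition \ref{hk} combine with the signs and scale factors in \eqref{e:stokes} to cancel the $\phi$-boundary contributions exactly, which is precisely what is needed for the form eigenvalue identity to hold cleanly.
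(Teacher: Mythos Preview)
Your argument is correct, and for parts (ii)--(iv) it coincides with the paper's (the paper's proof of (iv) is more telegraphic, merely citing Stokes' identity and Kato's first representation theorem, but the underlying computation is exactly the one you spell out; the weights $k_{\alpha_j}/\norm{\alpha_j^\vee}$ in Definition~\ref{hk} are chosen precisely so that the boundary terms from Stokes match $\delta_k$, so your ``delicate point'' resolves cleanly).

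The one genuine difference is in (i). The paper does not argue closedness of $h_k$ directly; instead it proceeds by induction on $k_{\max}$, applying the KLMN theorem to the perturbation $\beta=\delta_{k-k^0}$ of $h_{k^0}$, which requires the relative form bound $c\,(k-k^0)_{\max}<1$ and hence only advances $k$ in increments of size less than $1/c$. Your route is shorter: since $\delta_k\ge 0$, you get $h_0\le h_k\le (1+C(k))h_0 + C(k)(\cdot,\cdot)_{L^2}$ immediately from \eqref{e:sobineq}, so the $h_k$-graph norm is equivalent to the Sobolev norm on the full domain $W^{1,2}$ and closedness follows from completeness of $W^{1,2}$ without any smallness condition or induction. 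The paper's inductive KLMN argument would be needed if $\delta_k$ were not sign-definite (e.g.\ for attractive couplings), but in the repulsive regime your direct argument is the more efficient one, and it also makes the transport of the form core $D_0$ transparent.
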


\begin{proof}
The uniqueness of $H_k$  follows because a self-adjoint operator is
uniquely defined by its form by the first representation theorem
 \cite[Theorem~ VI.2.1]{kato}.  
For the existence we use induction on $k_{\text{max}}=\max\{k_\alpha|\alpha\in R\}$, starting from $k\equiv 0$ using Proposition \ref{freeHam}.  Assume therefore that $H_{k^0}$ is defined for some $k^0\ge 0$ with the properties mentioned in (i). Let $\beta$ be the form $\delta_{k-k^0}$. Thus $h_k=h_{k^0}+\beta$ is a perturbation of $h_{k^0}$ by $\beta$. Then it follows immediately from the non-decreasing of the 
forms $h_k$ and \eqref{e:sobineq} that $\beta(f,f) \le c (k-k^0)_\text{max}(h_{k^0}(f,f)+(f,f))$ for all $f\in W^{1,2}$.
The KLMN theorem \cite[Theorem~X.17]{reed-simon-II},  \cite[Theorem~VI.3.9]{kato}  now yield the existence of the operators $H_k$ for  those $k\in \mathcal K$ satisfying  $ k^0 \le k < k^0+1/c$. Whence by  induction they are defined for all $k\in \mathcal K$, yielding (i).
Statement (ii) follows from the non-decreasing of the $h_k$. 
Statement (iii) follows from  \cite[Theorem VI.2.1(iii)]{kato} and
$h_k(\phi^k_{\widehat\mu_k},\psi) = (-\Delta\phi^k_{\widehat\mu_k},\psi)_{L^2(A)}$ for all $\psi\in D_0$, which itself follows from the definition of $h_k$, symmetry of $h_k$ and \eqref{e:stokes}. 
As for (iv), let $k_1\ge k_2 \ge \dots \ge k^0$ be a sequence in $\mathcal K\cup \{0\}$ s.t. $k_j\to k^0$.
Then $h_{k_1} \ge h_{k_2} \ge \dots \ge 0$ as forms on $W^{1,2}$.
Now  $ h_{k^0}(\phi,\phi) = \inf_{k>k^0}  h_{k}(\phi,\phi)$,  $\forall\phi\in D_0$, yielding $h_{k^0}=\lim_{j\to\infty} h_{k_j}=\inf_{j\in\mathbb{N}} h_{k_j}$ by the first part of 
\cite[Theorem S16]{reed-simon-I} and (iv)  follows  by the second part of 
\cite[Theorem S16]{reed-simon-I}.
\end{proof}

\begin{proof}[Proof of Theorem \ref{complete}]
 Denote the space of $\mathbb{R}$-valued polynomials on $V$ over $\mathbb{R}$ by $P(V)$. To any $p\in P(V)$ one can associate in a natural way a constant coefficient differential operator $\partial_p$. Since $\partial_p \phi^k_\lambda = p(\lambda)\phi^k_\lambda$  for $p\in P(V)^W$ (see \eqref{e:usual}) and $P(V)^W$ separates the orbits $V/W$ it follows that
the $\phi^k_{\widehat\mu_k}$, $\mu\in \mathcal P^{++}$ must be linearly independent. The 
theorem now follows from Theorem~\ref{dorlas} 
applied to $(H_{tk})_{t\in[0,\infty)}$, $(\phi^{tk}_{\widehat\mu_{tk}})_{\mu\in \mathcal P^{++}}$ and using Propositions \ref{phicont}, \ref{freeHam} and \ref{Hk}.
\end{proof}

\noindent
{\bf Acknowledgments.} 
The research reported in this paper 
was supported in part by the Fondo Nacional de Desarrollo Cient\'ifico y Tecnol\'ogico (FONDECYT)  grant \#3080006.
The author would like to thank  Eric Opdam  and Jasper Stokman for many useful discussions and helpful comments. 
He also thanks  Jan Felipe van Diejen for interesting discussions.

%\bibliographystyle{amsplain}
%\bibliography{dicht}

\providecommand{\bysame}{\leavevmode\hbox to3em{\hrulefill}\thinspace}
\providecommand{\MR}{\relax\ifhmode\unskip\space\fi MR }
\providecommand{\MRhref}[2]{\href{http://www.ams.org/mathscinet-getitem?mr=#1}{#2}}
\providecommand{\href}[2]{#2}

\end{document}